\theoremstyle{plain}
\numberwithin{equation}{section}
\newtheorem{thm}{Theorem}[section]
\newtheorem{lem}[thm]{Lemma}
\newtheorem{cor}[thm]{Corollary}
\newenvironment{exam}[1]
{\begin{flushleft}\textbf{Example #1}.\enspace}%
{\end{flushleft}}
\newcommand{\complex}{{\mathbb C}}
\newcommand{\real}{{\mathbb R}}
\newcommand{\tbullet}{\raise .4ex\hbox{\tiny$\bullet$}} 
\newcommand{\rmtr}{\mathrm{tr\,}}
\newcommand{\rmin}{\mathrm{In\,}}
\newcommand{\ascript}{\mathcal{A}}
\newcommand{\cscript}{\mathcal{C}}
\newcommand{\escript}{\mathcal{E}}
\newcommand{\iscript}{\mathcal{I}}
\newcommand{\lscript}{\mathcal{L}}
\newcommand{\mscript}{\mathcal{M}}
\newcommand{\oscript}{\mathcal{O}}
\newcommand{\pscript}{\mathcal{P}}
\newcommand{\sscript}{\mathcal{S}}
\newcommand{\mscripthat}{\widehat{\mscript}}
\newcommand{\mscripttilde}{\widetilde{\mscript}}
\newcommand{\iscripthat}{\widehat{\iscript}}
\newcommand{\ab}[1]{\left|#1\right|}
\newcommand{\doubleab}[1]{\left|\left|#1\right|\right|}
\newcommand{\brac}[1]{\left\{#1\right\}}
\newcommand{\paren}[1]{\left(#1\right)}
\newcommand{\sqbrac}[1]{\left[#1\right]}
\newcommand{\elbows}[1]{{\left\langle#1\right\rangle}}
\newcommand{\ket}[1]{{\left|#1\right>}}
\newcommand{\bra}[1]{{\left<#1\right|}}
\begin{document}

\title{NONDISTURBING QUANTUM\\MEASUREMENT MODELS}
\author{Stan Gudder\\ Department of Mathematics\\
University of Denver\\ Denver, Colorado 80208\\
sgudder@du.edu}
\date{}
\maketitle

\begin{abstract}
A measurement model is a framework that describes a quantum measurement process. In this article we restrict attention to $MM$s on finite-dimensional Hilbert spaces. Suppose we want to measure an observable $A$ whose outcomes $A_x$ are represented by positive operators (effects) on a Hilbert Space $H$. We call $H$ the base or object system. We interact $H$ with a probe system on another Hilbert space $K$ by means of a quantum channel. The probe system contains a probe (or meter or pointer) observable $F$ whose outcomes $F_x$ are measured by an apparatus that is frequently (but need not be) classical in practice. The $MM$ protocol gives a method for determining the probability of an outcome $A_x$ for any state of $H$ in terms of the outcome $F_x$. The interaction channel usually entangles this state with an initial probe state of $K$ that can be quite complicated. However, if the channel is nondisturbing in a sense that we describe, then the entanglement is considerably simplified. In this article, we give formulas for observables and instruments measured by nondisturbing $MM$s. We begin with a general discussion of nondisturbing operators relative to a quantum context. We present two examples that illustrate this theory in terms of unitary nondisturbing channels.
\end{abstract}

\section{Introduction}  
This section discusses the basic concepts and definitions that are needed in the sequel. For more details and motivation we refer the reader to \cite{bgl95,gud120,gud220,hz12,kra83,nc00}. We shall only consider finite-dimensional Hilbert spaces $H$ and $K$. Let $\lscript (H)$ be the set of linear operators on $H$. For $S,T\in\lscript (H)$ we write $S\le T$ if $\elbows{\phi ,S\phi}\le\elbows{\phi ,T\phi}$ for all $\phi\in H$. We define the set of \textit{effects} on $H$ by
\begin{equation*}
\escript (H)=\brac{a\in\lscript (H)\colon 0\le a\le I}
\end{equation*}
where $0,I$ are the zero and identity operators, respectively. Effects correspond to yes-no measurements and when the result of a measuring
$a$ is yes, we say that $a$ \textit{occurs}. A one-dimensional projection $P_\phi =\ket{\phi}\bra{\phi}$ where $\doubleab{\phi}=1$ is an effect called an \textit{atom}. We call $\rho\in\escript (H)$ a \textit{partial state} if $\rmtr (\rho )\le 1$ and $\rho$ is a \textit{state} if $\rmtr (\rho )=1$. We denote the set of partial states by $\sscript _p(H)$ and the set of states by $\sscript (H)$. If $\rho\in\sscript (H)$, $a\in\escript (H)$, we call
$\pscript _\rho (a)=\rmtr (\rho a)$ the \textit{probability} that $a$ \textit{occurs in the state} $\rho$ \cite{bgl95,hz12,nc00}.

Let $\Omega _A$ be a finite set. A (finite) \textit{observable with outcome-space} $\Omega _A$ is a subset
\begin{equation*}
A=\brac{A_x\colon x\in\Omega _A}\subseteq\escript (H)
\end{equation*}
that satisfies $\sum\limits _{x\in\Omega _A}A_x=I$. We interpret $A_x$ as the effect that occurs when $A$ has outcome $x$. We denote the set of observables on $H$ by $\oscript (H)$. If $A\in\oscript (H)$, we define the \textit{effect-valued measure} $X\to A_X$ from $2^{\Omega _A}$ to
$\escript (H)$ by $A_X=\sum\limits _{x\in X}A_x$. We interpret $A_X$ as the event that $A$ has an outcome in $X$. If $\rho\in\sscript (H)$ and $A\in\oscript (H)$, the \textit{probability that} $A$ \textit{has an outcome} in $X\in\Omega _A$ when the system is in state $\rho$ is
$\pscript _\rho (A_X)=\rmtr (\rho A_X)$. Notice that $X\mapsto\pscript _\rho (A_X)$ is a probability measure on $\Omega _A$ \cite{bgl95,hz12}.

An \textit{operation} is a completely positive map $\ascript\colon\sscript _p(H)\to\sscript _p(H)$ \cite{hz12,nc00}. Every operation has a
\textit{Kraus decomposition} \cite{hz12,kra83,nc00}
\begin{equation*}
\ascript (\rho )=\sum _{i=1}^nS_i\rho S_i^*
\end{equation*}
where $S_i\in\lscript (H)$ with $\sum\limits _{i=1}^nS_i^*S_i\le I$. An operation $\ascript$ is a \textit{channel} if $\ascript (\rho )\in\sscript (H)$ for every $\rho\in\sscript (H)$. In this case, $\sum\limits _{i=1}^nS_i^*S_i=I$ and we denote the set of channels by $\cscript (H)$. For a finite set
$\Omega _\iscript$, an \textit{instrument with outcome-set} $\Omega _\iscript$ is a set of operations
$\iscript =\brac{\iscript _x\colon\in\Omega _\iscript}$ satisfying \cite{bgl95,gud120,gud220,hz12}
\begin{equation*}
\cscript _\iscript =\sum\brac{\iscript _x\colon\Omega _\iscript}\in\cscript (H)
\end{equation*}
Defining $\iscript _X=\sum\limits _{x\in X}\iscript _x$ for $X\subseteq\Omega _\iscript$ we see that $X\mapsto\iscript _X$ is an
\textit{operation-valued measure} on $H$. We denote the set of instruments on $H$ by $\rmin (H)$. We say that $\iscript\in\iscript (H)$
\textit{measures} $A\in\oscript (H)$ if $\Omega _\iscript =\Omega _A$ and
\begin{equation*}
\pscript _\rho (Ax)=\rmtr\sqbrac{\iscript _x(\rho )}
\end{equation*}
for all $\rho\in\sscript (H)$, $x\in\Omega _A$. There is a unique $A\in\oscript (H)$ that $\iscript$ measures and we write $A=\iscripthat$
\cite{gud120,gud220,hz12}.

A \textit{measurement model} ($MM$) is a 5-tuple $\mscript =(H,K,\eta ,\nu ,F)$ where $H,K$ are finite-dimensional Hilbert spaces called the
\textit{base} and \textit{probe} systems, respectively, $\eta\in\sscript (K)$ is an \textit{initial probe state}, $\nu\in\cscript (H\otimes K)$ is a channel describing the measurement interaction between the base and probe systems and $F\in\oscript (K)$ is the \textit{probe} (or \textit{meter})
\textit{observable} \cite{bgl95,gud120,gud220,hz12}. We say that $\mscript$ \textit{measures the model instrument} $\mscripthat\in\rmin (H)$ where $\mscripthat$ is the unique instrument satisfying
\begin{equation}                
\label{eq11}
\mscripthat _x(\rho )=\rmtr _K\sqbrac{\nu (\rho\otimes\eta )(I\otimes F_x)}
\end{equation}
for all $\rho\in\sscript (H)$, $x\in\Omega _F$. In \eqref{eq11}, $\rmtr _K$ is the partial trace over $K$ \cite{bgl95,hz12,nc00}. We also say that
$\mscript$ \textit{measures the model} observable $\mscript^{\wedge\wedge}$.

We thus have three levels of abstraction. At the basic level is an observable that we seek to measure. At the next level is an instrument
$\iscript$ which is an apparatus that can be employed to measure an observable $\iscripthat$. Although $\iscripthat$ is unique, there are many instruments that can be used to measure an observable. Moreover, $\iscript$ gives more information that $\iscripthat$ because, depending on the outcome $x$ (or event $X$), $\iscript$ updates the input state $\rho$ to give the output partial state $\iscript _x(\rho )$ (or $\iscript _X(\rho )$). At the highest level is a measurement model $\mscript$ that measures a unique model instrument $\mscripthat$ and unique observable
$\mscript ^{\wedge\wedge}$. Again, there are many $MM$s that measure any instrument or observable and $\mscript$ contains more information on how the measurement is performed.

\section{Nondisturbing Operators}  
Let $H$ and $K$ be finite dimensional complex Hilbert spaces for the base and probe systems, respectively. Let $\brac{\psi _i}$,
$i=1,2,\ldots ,n$, be an orthonormal basis for $H$ and let $C=\brac{P_{\psi _i}}$ be the corresponding atomic observables on $H$. We call $C$ a \textit{context} for $H$ and think of $C$ as a particular way of viewing the base system. Of course, there are many contexts and each provides a different view of $H$. If $S\in\lscript (H)$ has the form $S=\sum c_iP_{\psi _i}$, $c_i\in\complex$, we say that $S$ is \textit{measurable} with respect to $C$. In particular, any self-adjoint operator that commutes with $P_{\psi _i}$ for all $i=1,2,\ldots ,n$ is measurable with respect to $C$. Letting $I_K$ be the identity operator on $K$, an operator $A\in\lscript (H\otimes K)$ is $C$-\textit{nondisturbing} if
\begin{equation*}
A(P_{\psi _i}\otimes I_K)=(P_{\psi _i}\otimes I_K)A
\end{equation*}
$i=1,2,\ldots ,n$. We think of $C$-nondisturbing operators as those operators on $H\otimes K$ that leave the context invariant. For example, if $A=D\otimes E$ where $D$ is measurable with respect to $C$, then $A$ is $C$-nondisturbing. Of course, if $A$ is $C$-nondisturbing, $A$ may not be $C'$-nondisturbing for a different context $C'$. The $C$-nondisturbing operators form a $C^*$-subalgebra of $\lscript (H\otimes K)$. 

\begin{thm}    
\label{thm21}
{\rm{(a)}}\enspace The following statements are equivalent:
{\rm{(i)}}\enspace $A$ is $C$-nondisturbing.
{\rm{(ii)}}\enspace There exist operators $B_i\in\lscript (K)$, $i=1,2,\ldots ,n$, such that $A(\psi _i\otimes\phi )=\psi _i\otimes B_i\phi$ for all
$\phi\in K$.
{\rm{(iii)}}\enspace There exist operators $B_i\in\lscript (K)$, $i=1,2,\ldots ,n$ such that $A=\sum\limits _{i=1}^n(P_{\psi _i}\otimes B_i)$.
{\rm{(b)}}\enspace The operators in {\rm{(ii)}} and {\rm{(iii)}} are unique and satisfy
\begin{equation}                
\label{eq21}
B_i\phi=\sum _j\elbows{\psi _i\otimes\phi _j,A(\psi _i\otimes\phi )}\phi _j
\end{equation}
for every $\phi\in K$ and every orthonormal basis $\phi _j$ for $K$.
\end{thm}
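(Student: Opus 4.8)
The plan is to prove part (a) by establishing the cycle of implications (i)$\Rightarrow$(ii)$\Rightarrow$(iii)$\Rightarrow$(i), and then to extract the uniqueness assertion and the explicit formula \eqref{eq21} for part (b). The underlying idea is that commuting with all the block projections $P_{\psi _i}\otimes I_K$ is the same as being block-diagonal with respect to the decomposition $H\otimes K=\bigoplus _i\paren{\complex\psi _i}\otimes K$, and the $B_i$ are exactly the resulting blocks.

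For (i)$\Rightarrow$(ii), I would use the standard fact that an operator commuting with a projection leaves the range of that projection invariant. Concretely, the range of $P_{\psi _i}\otimes I_K$ is $\brac{\psi _i\otimes\chi\colon\chi\in K}$, and for any $\phi\in K$,
\begin{equation*}
(P_{\psi _i}\otimes I_K)A(\psi _i\otimes\phi )=A(P_{\psi _i}\otimes I_K)(\psi _i\otimes\phi )=A(\psi _i\otimes\phi ),
\end{equation*}
so $A(\psi _i\otimes\phi )$ is fixed by $P_{\psi _i}\otimes I_K$ and hence equals $\psi _i\otimes\chi$ for a unique $\chi\in K$. Defining $B_i\phi =\chi$ gives a map whose linearity follows from linearity of $A$ together with injectivity of $\chi\mapsto\psi _i\otimes\chi$. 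This is the only step needing genuine care, and I expect it to be the main (though modest) obstacle, since it relies on correctly invoking invariance of the range rather than just of individual vectors.

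The remaining implications are direct computations using orthonormality $P_{\psi _i}\psi _k=\delta _{ik}\psi _k$. For (ii)$\Rightarrow$(iii), since $\brac{\psi _i\otimes\phi _j}$ is a basis of $H\otimes K$ for any orthonormal basis $\brac{\phi _j}$ of $K$, it suffices to check that $A$ and $\sum _i P_{\psi _i}\otimes B_i$ agree on each $\psi _k\otimes\phi$, and indeed $\paren{\sum _i P_{\psi _i}\otimes B_i}(\psi _k\otimes\phi )=\psi _k\otimes B_k\phi =A(\psi _k\otimes\phi )$. For (iii)$\Rightarrow$(i), both $A(P_{\psi _j}\otimes I_K)$ and $(P_{\psi _j}\otimes I_K)A$ collapse to $P_{\psi _j}\otimes B_j$ when the sum is expanded.

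For part (b), uniqueness of the $B_i$ in (ii) is immediate from injectivity of $\chi\mapsto\psi _i\otimes\chi$; since (iii)$\Rightarrow$(ii) yields the same operators, they are unique in (iii) as well. Finally, the formula follows by expanding $B_i\phi =\sum _j\elbows{\phi _j,B_i\phi}\phi _j$ and computing the coefficients via
\begin{equation*}
\elbows{\psi _i\otimes\phi _j,A(\psi _i\otimes\phi )}=\elbows{\psi _i\otimes\phi _j,\psi _i\otimes B_i\phi}=\elbows{\phi _j,B_i\phi},
\end{equation*}
using $\doubleab{\psi _i}=1$, which establishes \eqref{eq21} for every $\phi\in K$ and every orthonormal basis $\brac{\phi _j}$.
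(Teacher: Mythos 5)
Your proposal is correct and follows essentially the same route as the paper: the only cosmetic difference is that you obtain $B_i$ from invariance of the range of $P_{\psi_i}\otimes I_K$ and derive formula \eqref{eq21} afterwards, whereas the paper defines $B_i$ by \eqref{eq21} up front and verifies (ii) by expanding $(P_{\psi_i}\otimes I_K)A(\psi_i\otimes\phi)$ in a product basis. The remaining implications and the uniqueness argument coincide with the paper's.
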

\begin{proof}
(a)\enspace To show that (i) implies (ii) suppose that $A$ is $C$-nondisturbing. Let $\brac{\phi _j}$ be an orthonormal basis for $K$ and define
$B_i\in\lscript (H)$, $i=1,2,\ldots ,n$ by \eqref{eq21}. We then have that
\begin{align*}
A(\psi _i\otimes\phi )&=A(P_{\psi _i}\otimes I_K)(\psi _i\otimes\phi )=(P_{\psi _i}\otimes I_K)A(\phi _i\otimes\phi )\\
  &=\sum _{r,s}\elbows{\psi _r\otimes\phi _s,(P_{\psi _i}\otimes I_K)A(\psi _i\otimes\phi )}\psi _r\otimes\phi _s\\
  &=\sum _s\elbows{\psi _i\otimes\phi _s,A(\psi _i\otimes\phi )}\psi _i\otimes\phi _s=\psi _i\otimes B_i\phi
\end{align*}
Hence, (ii) holds. To show that (ii) implies (iii) suppose (ii) holds. We conclude that
\begin{equation*}
\sum _{i=1}^n(P_{\psi _j}\otimes B_i)(\psi _j\otimes\phi )=\psi _j\otimes B_j\phi =A(\psi _j\otimes\phi )
\end{equation*}
for all $j=1,2,\ldots ,n$, $\phi\in K$. Hence, (iii) holds. To show that (iii) implies (i) suppose that (iii) holds. We then obtain
\begin{equation*}
A(P_{\psi _i}\otimes I_K)=P_{\psi _j}\otimes B_j=(P_{\psi _j}\otimes I_K)A
\end{equation*}
so (i) holds.
(b)\enspace If (ii) holds, then we have that
\begin{align*}
B_i\phi&=\sum _j\elbows{\phi _j,B_i\phi}\phi _j=\sum _j\elbows{\psi _i\otimes\phi _j,\psi _i\otimes B_i\phi}\phi _j\\
   &=\sum _j\elbows{\psi _i\otimes\phi _j,A(\psi _i\otimes\phi )}\phi _j
\end{align*}
so \eqref{eq21} holds.
\end{proof}

If $A$ is $C$-nondisturbing, the operators $B_i\in\lscript (K)$, $i=1,2,\ldots ,n$ in Theorem~\ref{thm21} are called the corresponding
\textit{probe operators}.

\begin{cor}    
\label{cor22}
Let $A\in\lscript (H\otimes K)$ be $C$-nondisturbing with probe operators $B_i$, $i=1,2,\ldots ,n$
{\rm{(a)}}\enspace $B_i=\rmtr _H\sqbrac{A(P_{\psi _i}\otimes I_K)}$.
{\rm{(b)}}\enspace $\rmtr _H(A)=\sum B_i$.
{\rm{(c)}}\enspace $\rmtr _K(A)=\sum\rmtr (B_i)P_{\psi _i}$ so $\rmtr _K(A)$ is $C$-measurable.
\end{cor}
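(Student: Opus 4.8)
The plan is to build everything on the decomposition $A=\sum_{i=1}^n(P_{\psi _i}\otimes B_i)$ furnished by Theorem~\ref{thm21}(iii), combined with the standard partial-trace identities $\rmtr _H(P\otimes Q)=\rmtr (P)\,Q$ and $\rmtr _K(P\otimes Q)=\rmtr (Q)\,P$, valid for $P\in\lscript (H)$ and $Q\in\lscript (K)$. Each of the three assertions then reduces to a direct algebraic manipulation, and I anticipate no real difficulty---the content is entirely a matter of bookkeeping with these two identities together with the fact that $\rmtr (P_{\psi _i})=1$ since each $P_{\psi _i}$ is a rank-one projection.

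For part (a), I would first simplify $A(P_{\psi _i}\otimes I_K)$ using the orthonormality of $\brac{\psi _k}$, which forces $P_{\psi _k}P_{\psi _i}=\delta _{ki}P_{\psi _i}$. This collapses the sum to a single surviving term:
\begin{equation*}
A(P_{\psi _i}\otimes I_K)=\sum _{k=1}^n(P_{\psi _k}P_{\psi _i}\otimes B_k)=P_{\psi _i}\otimes B_i.
\end{equation*}
Applying $\rmtr _H$ and using $\rmtr (P_{\psi _i})=1$ then gives $\rmtr _H\sqbrac{A(P_{\psi _i}\otimes I_K)}=\rmtr (P_{\psi _i})B_i=B_i$, which is exactly (a).

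For parts (b) and (c), I would simply apply the two partial traces termwise to the decomposition. Taking $\rmtr _H$ and again using $\rmtr (P_{\psi _i})=1$ yields $\rmtr _H(A)=\sum _i\rmtr (P_{\psi _i})B_i=\sum _iB_i$, establishing (b). Taking $\rmtr _K$ instead yields $\rmtr _K(A)=\sum _i\rmtr (B_i)P_{\psi _i}$, which is (c); and since the right-hand side is a complex-linear combination of the context projections $P_{\psi _i}$, it is measurable with respect to $C$ by the definition of $C$-measurability given above. The only point warranting attention is keeping the two partial-trace identities pointed in the correct direction (tracing out $K$ leaves the $H$-factor scaled by $\rmtr (B_i)$, and vice versa), but beyond that each part is immediate from the decomposition.
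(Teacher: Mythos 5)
Your proof is correct. It differs from the paper's in mechanics, though both rest on Theorem~\ref{thm21}: you start from the decomposition $A=\sum_i(P_{\psi_i}\otimes B_i)$ of part (iii), collapse $A(P_{\psi_i}\otimes I_K)$ to the single elementary tensor $P_{\psi_i}\otimes B_i$ via orthogonality, and then invoke the product identities $\rmtr_H(P\otimes Q)=\rmtr(P)\,Q$ and $\rmtr_K(P\otimes Q)=\rmtr(Q)\,P$ as known facts. The paper instead works from part (ii) (the action $A(\psi_i\otimes\phi)=\psi_i\otimes B_i\phi$) and expands everything in matrix elements with respect to product bases, in effect re-deriving those same partial-trace identities in coordinates rather than citing them. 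Your route is shorter and makes the structural reason for each identity transparent (each claim is just the partial trace applied termwise to a sum of elementary tensors); the paper's coordinate computation is more self-contained in that it appeals only to the matrix-element definition of $\rmtr_H$ and $\rmtr_K$, at the cost of heavier bookkeeping. The concluding observation in (c), that a linear combination $\sum_i\rmtr(B_i)P_{\psi_i}$ is $C$-measurable by definition, is exactly as the paper intends.
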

\begin{proof}
(a)\enspace If $\brac{\phi _k}$ is an orthonormal basis for $K$ we have by Theorem~\ref{thm21} that
\begin{align*}
B_i&=\sum _{k,l}\elbows{\phi _k,B_i\phi _l}\ket{\phi _k}\bra{\phi _l}
   =\sum _{k,l}\elbows{\psi _i\otimes\phi _k,\psi _i\otimes B_i\phi _l}\ket{\phi _k}\bra{\phi _l}\\
   &=\sum _{k,l}\elbows{\psi _i\otimes\phi _k,A(\psi _i\otimes\phi _l)}\ket{\phi _k}\bra{\phi _l}\\
   &=\sum _{j,k,l}\elbows{\psi _i\otimes\phi _k,A(P_{\psi _i}\otimes I_K)(\psi _j\otimes\phi _l)}\ket{\phi _k}\bra{\phi _l}\\
   &=\rmtr _H\sqbrac{A(P_{\psi _i}\otimes I_K)}
\end{align*}
(b)\enspace Similar to (a) we obtain
\begin{align*}
\sum _iB_i&=\sum _{i,j,k}\elbows{\phi _j,B_i\phi _k}\ket{\phi _j}\bra{\phi _k}
    =\sum _{i,j,k}\elbows{\psi _i\otimes\phi _j,\psi _i\otimes B_i\phi _k}\ket{\phi _j}\bra{\phi _k}\\
    &=\sum _{i,j,k}\elbows{\psi _i\otimes\phi _j,A(\psi _i\otimes\phi _k)}\ket{\phi _j}\bra{\phi _k}=\rmtr _H(A)
\end{align*}
(c)\enspace By the definition of $\rmtr _K$ we obtain
\begin{align*}
\rmtr _K(A)&=\sum _{i,j,k}\elbows{\psi _i\otimes\phi _j,A(\psi _k\otimes\phi _j)}\ket{\psi _i}\bra{\psi _k}\\
   &=\sum _{i,j,k}\elbows{\psi _i\otimes\phi _j,\psi _k\otimes B_k\phi _j}\ket{\psi _i}\bra{\psi _k}\\
   &=\sum _{i,j}\elbows{\phi _j,B_i\phi _j}\ket{\psi _i}\bra{\psi _i}=\sum _i\rmtr (B_i)P_{\psi _i}\qedhere
\end{align*}
\end{proof}

It follows from Corollary~\ref{cor22}(c) that $\rmtr _K(A)$ is unitary, self-adjoint, an effect or a projection if and only if $\ab{\rmtr (B_i)}=1$,
$\rmtr (B_i)\in\real$, $0\le\rmtr (B_i)\le 1$ or $\rmtr (B_i)\in(0,1)$ for all $i=1,2,\ldots ,n$, respectively.

\begin{lem}    
\label{lem23}
Let $A\in\lscript (H\otimes K)$ be $C$-nondisturbing with probe operators $B_i$, $i=1,2,\ldots ,n$.
{\rm{(a)}}\enspace For any $\phi\in K$ and orthonormal basis $\brac{\phi _j}$ for $K$ we have that
\begin{equation*}
B_i^*\phi =\sum _j\elbows{\psi _i\otimes\phi _j,A^*(\psi _i\otimes\phi )}\phi _j
\end{equation*}
{\rm{(b)}}\enspace $A^*(\psi _i\otimes\phi )=\psi _i\otimes B_i^*\phi$.
{\rm{(c)}}\enspace $A$ is self-adjoint or unitary or a projection if and only if $B_i$ are self-adjoint or unitary or projections, $i=1,2,,,\ldots ,n$, respectively.
{\rm{(d)}}\enspace If $C_i\in\lscript (K)$ are the probe operators for a $C$-nondisturbing operator $D\in\lscript (H\otimes K)$, then $A\le D$ if and only if $B_i\le C_i$, $i=1,2,\ldots ,n$.
{\rm{(e)}}\enspace $A\in\escript (H\otimes K)$ if and only if $B_i\in\escript (K)$, $i=1,2,\ldots ,n$.
{\rm{(f)}}\enspace $A\in\escript (H\otimes K)$ with $\rmtr _H(A)=I$ if and only if $\brac{B_i}\in\oscript (K)$.
\end{lem}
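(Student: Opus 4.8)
The plan is to exploit the decomposition $A=\sum_{i=1}^n P_{\psi_i}\otimes B_i$ furnished by Theorem~\ref{thm21}(iii) together with the orthogonality relations $P_{\psi_i}P_{\psi_j}=\delta_{ij}P_{\psi_i}$. These immediately give the two structural identities that drive every part of the lemma: the adjoint formula $A^*=\sum_{i=1}^n P_{\psi_i}\otimes B_i^*$ (since each $P_{\psi_i}$ is self-adjoint) and, for a second $C$-nondisturbing $D=\sum_i P_{\psi_i}\otimes C_i$, the product formula $AD=\sum_{i=1}^n P_{\psi_i}\otimes B_iC_i$. In effect the assignment $A\mapsto(B_1,\dots ,B_n)$ is a unital $*$-isomorphism onto $\bigoplus_{i=1}^n\lscript(K)$, and most assertions are just the blockwise reading of a familiar operator property. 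Throughout I would invoke the uniqueness of probe operators from Theorem~\ref{thm21}(b) to pass from an identity of operators on $H\otimes K$ to the corresponding identities for the $B_i$.

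For (a) I would compute directly: for any orthonormal basis $\brac{\phi_j}$ of $K$,
\begin{equation*}
\elbows{\psi_i\otimes\phi_j,A^*(\psi_i\otimes\phi)}=\elbows{A(\psi_i\otimes\phi_j),\psi_i\otimes\phi}=\elbows{\psi_i\otimes B_i\phi_j,\psi_i\otimes\phi}=\elbows{\phi_j,B_i^*\phi},
\end{equation*}
and summing $\elbows{\phi_j,B_i^*\phi}\phi_j$ over $j$ recovers $B_i^*\phi$. Part (b) is then immediate from the adjoint formula above: applying $A^*=\sum_j P_{\psi_j}\otimes B_j^*$ to $\psi_i\otimes\phi$ and using $P_{\psi_j}\psi_i=\delta_{ij}\psi_i$ leaves $\psi_i\otimes B_i^*\phi$; equivalently, (b) asserts that the $B_i^*$ are the probe operators of $A^*$, which is characterization (ii) of Theorem~\ref{thm21} applied to $A^*$.

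Part (c) reduces to reading off fixed-point and idempotency conditions through the isomorphism. Self-adjointness is $A=A^*$, which by uniqueness of probe operators is equivalent to $B_i=B_i^*$ for all $i$. For projections I would combine self-adjointness with $A^2=\sum_i P_{\psi_i}\otimes B_i^2$, so $A^2=A$ is equivalent to $B_i^2=B_i$. For unitarity I would use $A^*A=\sum_i P_{\psi_i}\otimes B_i^*B_i$; comparing with $I=\sum_i P_{\psi_i}\otimes I_K$ gives $B_i^*B_i=I_K$, and in finite dimensions an isometry is automatically unitary, so each $B_i$ is unitary and conversely.

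The one step that needs a genuine argument is (d), the decoupling of the order relation across the orthogonal summands. Here I would expand an arbitrary $\Psi\in H\otimes K$ in the base basis as $\Psi=\sum_i\psi_i\otimes\chi_i$ with $\chi_i\in K$ and use the product/adjoint structure to collapse the cross terms, obtaining
\begin{equation*}
\elbows{\Psi,(D-A)\Psi}=\sum_{i=1}^n\elbows{\chi_i,(C_i-B_i)\chi_i}.
\end{equation*}
Thus $B_i\le C_i$ for every $i$ forces $A\le D$, while the converse follows by testing $A\le D$ against the vectors $\psi_i\otimes\phi$, which isolates $\elbows{\phi,(C_i-B_i)\phi}\ge 0$. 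With (d) in hand, (e) is the special case obtained by comparing $A$ with the $C$-nondisturbing operators $0$ and $I=\sum_i P_{\psi_i}\otimes I_K$, whose probe operators are $0$ and $I_K$: the conditions $0\le A\le I$ are equivalent to $0\le B_i\le I_K$, i.e. $B_i\in\escript(K)$. Finally (f) combines (e) with Corollary~\ref{cor22}(b): since $\rmtr_H(A)=\sum_i B_i$, the constraint $\rmtr_H(A)=I$ is exactly $\sum_i B_i=I_K$, so $A\in\escript(H\otimes K)$ with $\rmtr_H(A)=I$ holds precisely when each $B_i\in\escript(K)$ and $\sum_i B_i=I_K$, which is the definition of $\brac{B_i}\in\oscript(K)$. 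I expect (d) to be the only place requiring care; everything else is bookkeeping through the $*$-isomorphism.
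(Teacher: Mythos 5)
Your proposal is correct, and for parts (a), (b), (c), (e) and (f) it follows essentially the same route as the paper: the adjoint computation for (a), reading (b) off as the statement that the $B_i^*$ are the probe operators of $A^*$, passing operator identities through the blockwise decomposition $A=\sum_i P_{\psi_i}\otimes B_i$ for (c), and assembling (e) and (f) from (d) and Corollary~\ref{cor22}(b). The one place you genuinely diverge is (d). The paper argues that $A\le D$ implies $A(P_{\psi _i}\otimes I_K)\le D(P_{\psi _i}\otimes I_K)$ and then invokes order-preservation of $\rmtr _H$ together with Corollary~\ref{cor22}(a); note that the first step is not valid for arbitrary operators and really rests on the identity $A(P_{\psi _i}\otimes I_K)=(P_{\psi _i}\otimes I_K)A(P_{\psi _i}\otimes I_K)$, i.e.\ on conjugation by the projection, which the paper leaves implicit. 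Your alternative --- expanding $\Psi=\sum_i\psi _i\otimes\chi _i$ and decoupling the quadratic form into $\sum_i\elbows{\chi _i,(C_i-B_i)\chi _i}$ --- is more elementary and makes both directions of the equivalence transparent in one computation, at the cost of not reusing the partial-trace machinery of Corollary~\ref{cor22}. Both arguments are sound; yours is arguably the more self-contained of the two.
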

\begin{proof}
(a)\enspace Applying \eqref{eq21} we obtain for every $\phi\in K$
\begin{align*}
B_i^*\phi&=\sum _j\elbows{\phi _j,B_i^*\phi}\phi _j=\sum _j\elbows{B_i\phi _j,\phi}\phi _j
  =\sum _j\elbows{\psi _i\otimes B_i\phi _j,\psi _i\otimes\phi}\phi _j\\
  &=\sum _j\elbows{A(\psi _i\otimes\phi _j),\psi _i\otimes\phi}\phi _j=\sum _j\elbows{\psi _i\otimes\phi _j,A^*(\phi _i\otimes\phi )}\phi _j
\end{align*}
(b)\enspace This follows from (a).
(c)\enspace If $B_i$ are self-adjoint, $i=1,2,\ldots ,n$, then by (b) we obtain
\begin{equation*}
A^*(\psi _i\otimes\phi )=\psi _i\otimes B_i^*\phi =\psi _i\otimes B_i\phi =A(\psi _i\otimes\phi )
\end{equation*}
Hence, $A^*=A$ so $A$ is self-adjoint. If $A$ is self-adjoint, the by (b) we have that
\begin{equation*}
\psi _i\otimes B_i\phi =A(\psi _i\otimes\phi )=A^*(\psi _i\otimes\phi )=\psi _i\otimes A^*\phi
\end{equation*}
Hence, $B_i=B_i^*$ so $B_i$ is self-adjoint, $i=1,2,\ldots ,n$. That $A$ is unitary if and only if $B_i$ are unitary, $i=1,2,\ldots ,n$, is similar. Since $A^2=A$ is equivalent to
\begin{equation*}
A^2(\psi _i\otimes\phi )=\psi _i\otimes B_i^2\phi =A(\psi _i\otimes\phi )=\psi _i\otimes B_i\phi
\end{equation*}
which is equivalent to $B_i=B_i^2$, $i=1,2,\ldots ,n$, we conclude that $A$ is a projection if and only if $B_i$ are projections, $i=1,2,\ldots ,n$.
(d)\enspace If $A\le D$ then $A(P_{\psi _i}\otimes I_K)\le D(P_{\psi _i}\otimes I_K)$, $i=1,2,\ldots ,n$. Since $\rmtr _H$ preserves order it follows from Corollary~\ref{cor22}(a) that $B_i\le C_i$, $i=1,2,\ldots ,n$. Conversely, if $B_i\in C_i$, then
$P_{\psi _i}\otimes B_i\le P_{\psi _i}\otimes C_i$, $i=1,2,\ldots ,n$. Applying Theorem~\ref{thm21}(iii) we conclude that $A\le D$.
(e)\enspace This follows from (c) and (d).
(f)\enspace This follows from (e) and Corollary~\ref{cor22}(b).
\end{proof}

\begin{lem}    
\label{lem24}
Let $A\in\lscript (H\otimes K)$ be $C$-nondisturbing with probe operators $B_i$, $i=1,2,\ldots ,n$.
{\rm{(a)}}\enspace For any $B\in\lscript (H)$, $D\in\lscript (K)$ we have that
\begin{equation}                
\label{eq22}
A(B\otimes D)A^*=\sum _{i,j}(P_{\psi _i}BP_{\psi _j}\otimes B_iDB_j^*)
\end{equation}
{\rm{(a)}}\enspace For $k=1,2,\ldots ,n$ we obtain
\begin{equation*}
A(P_{\psi _k}\otimes I_K)A^*=P_{\psi _k}\otimes B_kB_k^*
\end{equation*}
\end{lem}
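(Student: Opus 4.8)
The plan is to reduce everything to the explicit tensor-sum representation of $A$ furnished by Theorem~\ref{thm21}(iii), namely $A=\sum_i P_{\psi_i}\otimes B_i$, together with the matching representation of the adjoint. For the adjoint I would first note that Lemma~\ref{lem23}(b) gives $A^*(\psi_i\otimes\phi)=\psi_i\otimes B_i^*\phi$, which is precisely the form appearing in Theorem~\ref{thm21}(ii) with probe operators $B_i^*$. Hence $A^*$ is itself $C$-nondisturbing with probe operators $B_i^*$, and Theorem~\ref{thm21}(iii) yields $A^*=\sum_j P_{\psi_j}\otimes B_j^*$. (Equivalently, one may just take the adjoint of the tensor sum term by term, since $(P_{\psi_j}\otimes B_j)^*=P_{\psi_j}\otimes B_j^*$.)

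For part (a), I would substitute these two representations into $A(B\otimes D)A^*$ and expand the product over the two summation indices $i$ and $j$. Each summand is a product of three simple tensors, so the multiplicativity rule $(X_1\otimes Y_1)(X_2\otimes Y_2)(X_3\otimes Y_3)=X_1X_2X_3\otimes Y_1Y_2Y_3$, applied with $P_{\psi_i},B,P_{\psi_j}$ on the base factor and $B_i,D,B_j^*$ on the probe factor, produces exactly the claimed term $P_{\psi_i}BP_{\psi_j}\otimes B_iDB_j^*$. Summing over $i,j$ gives \eqref{eq22}.

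For part (b), I would simply specialize part (a) to $B=P_{\psi_k}$ and $D=I_K$, obtaining $\sum_{i,j}P_{\psi_i}P_{\psi_k}P_{\psi_j}\otimes B_iB_j^*$. Orthonormality of $\brac{\psi_i}$ makes $P_{\psi_i}P_{\psi_k}=\delta_{ik}P_{\psi_k}$ and $P_{\psi_k}P_{\psi_j}=\delta_{kj}P_{\psi_k}$, so every summand vanishes except the one with $i=j=k$, collapsing the double sum to $P_{\psi_k}\otimes B_kB_k^*$.

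This argument is essentially pure bookkeeping, so I do not expect a genuine obstacle; the only point needing minor care is the justification of the adjoint representation $A^*=\sum_j P_{\psi_j}\otimes B_j^*$, which is why I would invoke Lemma~\ref{lem23}(b) explicitly before expanding. Everything else follows from the bilinearity and multiplicativity of the tensor product and the orthogonality of the atomic projections defining the context $C$.
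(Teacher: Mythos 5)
Your proposal is correct and follows essentially the same route as the paper: part (a) by expanding the tensor-sum representation $A=\sum_i P_{\psi_i}\otimes B_i$ from Theorem~\ref{thm21}(iii) (together with its adjoint), and part (b) by specializing to $B=P_{\psi_k}$, $D=I_K$ and using orthogonality of the $P_{\psi_i}$. You merely spell out the bookkeeping that the paper leaves implicit.
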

\begin{proof}
(a)\enspace This follows from Theorem~\ref{thm21}(iii).
(b)\enspace Applying (a) we conclude that
\begin{equation*}
A(P_{\psi _k}\otimes I_K)A^*=\sum _{i,j}(P_{\psi _i}P_{\psi _k}P_{\psi _j}\otimes B_iI_KB_j^*)=P_{\psi _k}\otimes B_kB_k^*\qedhere
\end{equation*}
\end{proof}

\section{Nondisturbing Channels}  
A channel $\nu\in\cscript (H\otimes K)$ is $C$-\textit{nondisturbing} if $\nu$ has a Kraus decomposition
$\nu (\sigma )=\sum\limits _{k=1}^mS_k\sigma S_k^*$ where $\sum S_k^*S_k=I$ and each $S_k$ is $C$-nondisturbing.

\begin{thm}    
\label{thm31}
A channel $\nu\in\cscript (H\otimes K)$ is $C$-nondisturbing if and only if there exist channels $\Gamma _i\in\cscript (K)$, $i=1,2,\ldots ,n$, where $\Gamma _i(\eta )=\sum _{k=1}^mB_i^k\eta B_i^{k*}$ and we have that
\begin{equation}                
\label{eq31}
\nu (\rho\otimes\eta )=\sum _{i,j,k}(P_{\psi _i}\rho P_{\psi _j}\otimes B_i^k\eta B_j^{k*})
\end{equation}
for all $\rho\in\sscript (H)$, $\eta\in\sscript (K)$.
\end{thm}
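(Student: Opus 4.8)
The plan is to prove both implications by translating between a Kraus decomposition of $\nu$ into $C$-nondisturbing operators and the probe-operator data $\brac{B_i^k}$, using Theorem~\ref{thm21}(iii) to pass between a $C$-nondisturbing operator and its probe operators, and Lemma~\ref{lem24}(a) to evaluate the conjugation $S(\rho\otimes\eta)S^*$ on product states. The channel condition for the $\Gamma_i$ will come out of the normalization $\sum_k S_k^*S_k=I$ together with the orthogonality of the projections $P_{\psi_i}$.

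For the forward direction, suppose $\nu$ is $C$-nondisturbing, so it admits a Kraus decomposition $\nu(\sigma)=\sum_{k=1}^m S_k\sigma S_k^*$ with $\sum_k S_k^*S_k=I$ and each $S_k$ being $C$-nondisturbing. By Theorem~\ref{thm21}(iii) I may write $S_k=\sum_{i=1}^n(P_{\psi_i}\otimes B_i^k)$, which defines the probe operators $B_i^k\in\lscript(K)$. Applying Lemma~\ref{lem24}(a) to $A=S_k$ with $B=\rho$ and $D=\eta$ gives $S_k(\rho\otimes\eta)S_k^*=\sum_{i,j}(P_{\psi_i}\rho P_{\psi_j}\otimes B_i^k\eta B_j^{k*})$, and summing over $k$ yields \eqref{eq31}. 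To see that each $\Gamma_i$ is a channel I would compute $S_k^*S_k=\sum_i(P_{\psi_i}\otimes B_i^{k*}B_i^k)$ using $P_{\psi_i}P_{\psi_j}=\delta_{ij}P_{\psi_i}$; summing over $k$ and comparing with $\sum_k S_k^*S_k=I=\sum_i(P_{\psi_i}\otimes I_K)$ forces $\sum_k B_i^{k*}B_i^k=I_K$ for each $i$, which is exactly the channel condition for $\Gamma_i$.

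For the converse, given channels $\Gamma_i$ with Kraus operators $B_i^k$ (so $\sum_k B_i^{k*}B_i^k=I_K$) such that \eqref{eq31} holds, I would define $S_k=\sum_i(P_{\psi_i}\otimes B_i^k)$, which is $C$-nondisturbing by Theorem~\ref{thm21}(iii). The same computation as above gives $\sum_k S_k^*S_k=\sum_i(P_{\psi_i}\otimes I_K)=I$, and Lemma~\ref{lem24}(a) shows that $\sum_k S_k(\rho\otimes\eta)S_k^*$ equals the right-hand side of \eqref{eq31}, hence equals $\nu(\rho\otimes\eta)$ for every product state.

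The one point requiring care — the main obstacle — is that \eqref{eq31} is asserted only on product states $\rho\otimes\eta$, whereas to conclude that $\sigma\mapsto\sum_k S_k\sigma S_k^*$ really is $\nu$ (and hence that $\nu$ inherits a $C$-nondisturbing Kraus decomposition) I must match the two maps on all of $\lscript(H\otimes K)$, including entangled states. This is resolved by noting that both maps are linear and that the product states span $\lscript(H\otimes K)$: every self-adjoint operator is a real combination of states, so every operator on $H$ and on $K$ is a complex combination of states, whence the operators $\rho\otimes\eta$ span $\lscript(H)\otimes\lscript(K)=\lscript(H\otimes K)$. Agreement on this spanning set forces $\nu(\sigma)=\sum_k S_k\sigma S_k^*$ for all $\sigma$, so $\nu$ is $C$-nondisturbing and the converse is complete.
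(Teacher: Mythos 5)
Your proposal is correct and follows essentially the same route as the paper: a Kraus decomposition into $C$-nondisturbing operators, Theorem~\ref{thm21}(iii) and Lemma~\ref{lem24}(a) to obtain \eqref{eq31}, the normalization $\sum_k S_k^*S_k=I$ together with orthogonality of the $P_{\psi_i}$ to force $\sum_k B_i^{k*}B_i^k=I_K$, and linearity plus the spanning of $\lscript(H\otimes K)$ by product states to extend the converse from product states to all of $\sscript(H\otimes K)$. The only cosmetic difference is that you verify the normalization via the operator identity $S_k^*S_k=\sum_i(P_{\psi_i}\otimes B_i^{k*}B_i^k)$ while the paper evaluates $\sum_k S_k^*S_k$ on vectors $\psi_i\otimes\phi$ using Lemma~\ref{lem23}(b).
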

\begin{proof}
Suppose $\nu\in\cscript (H\otimes K)$ is $C$-nondisturbing. Then $\nu$ has a Kraus decomposition $\nu (\sigma )=\sum S_k\sigma S_k^*$ where each $S_k$ is $C$-nondisturbing. Let $B_i^k$ be the probe operators for $S_k$, $i=1,2,\ldots ,n$, $k=1,2,\ldots ,m$. Applying
Lemma~\ref{lem24} we obtain
\begin{equation*}
\nu (\rho\otimes\eta )=\sum _kS_k(\rho\otimes\eta )S_k^*=\sum _{i,j,k}(P_{\psi _i}\rho P_{\psi _j}\otimes B_i^k\eta B_j^{k*})
\end{equation*}
which gives \eqref{eq31}. Since $\sum S_k^*S_k=I$, we conclude from Lemma~\ref{lem23}(b) that
\begin{align*}
\psi _i\otimes\phi &=\sum _kS_k^*S_k(\psi _i\otimes\phi )=\sum _kS_k^*(\psi _i\otimes B_i^k\phi )
   =\sum _k(\psi _i\otimes B_i^{k*}B_i^k\phi )\\
   &=\psi _i\otimes\sum _kB_i ^{k*}B_i^k\phi
\end{align*}
for all $\phi\in K$. Hence, $\sum\limits _kB_i^{k*}B_i^k=I_K$ so $\Gamma _i(\eta )=\sum\limits _kB_i^k\eta B_i^{k*}$ is a channel,
$i=1,2,\ldots ,n$. Conversely, suppose we have these channels $\Gamma _i\in\cscript (K)$, $i=1,2,\ldots ,n$ and \eqref{eq31} holds. Define the operators $S_k\in\lscript (H\otimes K)$ by $S_k(\psi _i\otimes\phi )=\psi _i\otimes B_i^k\phi$, $i=1,2,\ldots ,n$, $k=1,2,\ldots ,m$. By
Theorem~\ref{thm21}(a), $S_k$ is $C$-nondisturbing, $k=1,2,\ldots ,m$, and by \eqref{eq31} and our previous calculation we obtain
$\nu (\rho\otimes\eta )=\sum S_k(\rho\otimes\eta )S_k^*$. Since any $\sigma\in\sscript (H\otimes K)$ is a linear combination of product states, we conclude that $\nu (\sigma )=\sum S_k\sigma S_k^*$ for all $\sigma\in\sscript (H\otimes K)$.
\end{proof}

It is interesting to note that if $\rho$ is $C$-measurable so that $\rho P_{\psi _i}=\lambda _iP_{\psi _i}$, $\lambda _i\in\real$, then by
\eqref{eq31}
\begin{equation*}
\nu (\rho\otimes\eta )=\sum _i(\lambda _iP_{\psi _i}\otimes\sum _kB_i^k\eta B_i^{k*})
  =\sum _i\sqbrac{\lambda _iP_{\psi _i}\otimes\Gamma _i(\eta )}
\end{equation*}

\begin{cor}    
\label{cor32}
If $\nu\in\cscript (H\otimes K)$ is a $C$-nondisturbing channel, then $\nu$ has the form \eqref{eq31} and
\begin{align}                
\label{eq32}
\rmtr _K\sqbrac{\nu (\rho\otimes\eta )}&=\sum _{i,j,k}\rmtr (B_i^k\eta B_j^{k*})P_{\psi _i}\rho P_{\psi _j}\notag\\
\rmtr _H\sqbrac{\nu (\rho\otimes\eta )}&=\sum _j\elbows{\psi _i,\rho\psi _i}\Gamma _i(\eta )
\end{align}
\end{cor}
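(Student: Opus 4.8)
The plan is to start from the explicit Kraus form of a $C$-nondisturbing channel already established in Theorem~\ref{thm31}, namely
\begin{equation*}
\nu (\rho\otimes\eta )=\sum _{i,j,k}(P_{\psi _i}\rho P_{\psi _j}\otimes B_i^k\eta B_j^{k*}),
\end{equation*}
and then apply the two partial traces term by term. The assertion that $\nu$ has the form \eqref{eq31} is nothing more than a restatement of Theorem~\ref{thm31}, so the only genuine work is computing $\rmtr_K$ and $\rmtr_H$ of this right-hand side using linearity and the elementary product rules for the partial trace.

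For the first formula I would invoke linearity together with the rule $\rmtr_K(S\otimes T)=\rmtr(T)\,S$ for $S\in\lscript(H)$, $T\in\lscript(K)$. Applying this to each summand $P_{\psi_i}\rho P_{\psi_j}\otimes B_i^k\eta B_j^{k*}$ pulls the scalar $\rmtr(B_i^k\eta B_j^{k*})$ out in front of $P_{\psi_i}\rho P_{\psi_j}$, which produces the stated expression at once, with no further simplification needed.

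For the second formula I would use the companion rule $\rmtr_H(S\otimes T)=\rmtr(S)\,T$, giving $\sum_{i,j,k}\rmtr(P_{\psi_i}\rho P_{\psi_j})\,B_i^k\eta B_j^{k*}$. The one real step is the evaluation of the scalar $\rmtr(P_{\psi_i}\rho P_{\psi_j})$: by cyclicity of the trace it equals $\rmtr(P_{\psi_j}P_{\psi_i}\rho)$, and since $\brac{\psi_i}$ is orthonormal we have $P_{\psi_j}P_{\psi_i}=\delta_{ij}P_{\psi_i}$, so the scalar collapses to $\delta_{ij}\elbows{\psi_i,\rho\psi_i}$. Substituting this back annihilates the off-diagonal terms, forcing $j=i$ and leaving $\sum_{i,k}\elbows{\psi_i,\rho\psi_i}\,B_i^k\eta B_i^{k*}$; recognizing the inner $k$-sum as $\Gamma_i(\eta)$ then yields the result (so the free index in the displayed formula is $i$, not $j$). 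The main obstacle is precisely this orthogonality collapse, and even it is routine once the cyclic rearrangement is made; the remainder of the corollary is an immediate consequence of the structure theorem.
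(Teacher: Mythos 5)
Your proposal is correct and follows essentially the same route as the paper: expand $\nu(\rho\otimes\eta)$ via \eqref{eq31}, apply the product rules for $\rmtr_K$ and $\rmtr_H$ termwise, and collapse $\rmtr(P_{\psi_i}\rho P_{\psi_j})$ to $\delta_{ij}\elbows{\psi_i,\rho\psi_i}$ to recognize $\Gamma_i(\eta)$. You also correctly identify that the summation index in the second displayed formula of \eqref{eq32} should be $i$ rather than $j$, which is a typographical slip in the statement.
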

\begin{proof}
Applying \eqref{eq31} gives
\begin{align*}
\rmtr _K\sqbrac{\nu (\rho\otimes\eta )}&=\sum _{i,j,k}\rmtr _K(P_{\psi _i}\rho P_{\psi _j}\otimes B_i^k\eta B_j^{k*})\\
   &=\sum _{i,j,k}\rmtr (B_i^k\eta B_j^{k*})P_{\psi _i}\rho P_{\psi _j}
\end{align*}
Moreover,
\begin{align*}
\rmtr _H\sqbrac{\nu (\rho\otimes\eta )}&=\sum _{i,j,k}\rmtr _H(P_{\psi _i}\rho P_{\psi _j}\otimes B_i^k\eta B_j^{k*})\\
  &=\sum _{i,j,k}\rmtr (P_{\psi _i}\rho P_{\psi _j})B_i^k\eta B_j^{k*}\\
  &=\sum _i\elbows{\psi _i,\rho\psi _i}\sum _kB_i^k\eta B_i^{k*}=\sum _i\elbows{\psi _i,\rho\psi _i}\Gamma _i(\eta )\qedhere
\end{align*}
\end{proof}

We see from \eqref{eq32} that for every $\rho\in\sscript (H)$ the map $\eta\mapsto\rmtr _H\sqbrac{\nu (\rho\otimes\eta )}$ is a channel on $K$ that is a convex combination of the channels $\Gamma _i$, $i=1,2,\ldots,n$.

We now arrive at our most important definition. We say that a $MM$ $\mscript =(H,K,\eta ,\nu ,F)$ is $C$-\textit{nondisturbing} if the channel
$\nu$ is $C$-nondisturbing.

\begin{thm}    
\label{thm33}
If $\mscript =(H,K,\eta ,\nu ,F)$ is a $C$-nondisturbing $MM$, then $\nu$ is given by \eqref{eq31}. The instrument and observable measured by
$\mscript$ become
\begin{align}                
\label{eq33}                        
\mscripthat _x(\rho )&=\sum _{i,j,k}\rmtr (B_i^k\eta B_j^{k*}F_x)P_{\psi _i}\rho P_{\psi _j}\\
\label{eq34}                          
\mscript _x^{\wedge\wedge}&=\sum _i\rmtr\sqbrac{\Gamma _i(\eta )F_x}P_{\psi _i}
\end{align}
\end{thm}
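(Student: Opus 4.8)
The plan is to prove \eqref{eq33} by direct substitution into the defining relation \eqref{eq11} for the model instrument, and then to obtain \eqref{eq34} from \eqref{eq33} by descending from the instrument to the unique observable it measures.

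First I would invoke Theorem~\ref{thm31}: since $\mscript$ is $C$-nondisturbing, the channel $\nu$ has the form \eqref{eq31}. Substituting \eqref{eq31} into \eqref{eq11} and pulling the factor $I\otimes F_x$ inside the sum, the generic summand becomes $P_{\psi _i}\rho P_{\psi _j}\otimes B_i^k\eta B_j^{k*}F_x$. Applying $\rmtr_K$ term by term and using $\rmtr_K(S\otimes T)=\rmtr(T)S$ then yields
\begin{equation*}
\mscripthat _x(\rho )=\sum _{i,j,k}\rmtr (B_i^k\eta B_j^{k*}F_x)\,P_{\psi _i}\rho P_{\psi _j},
\end{equation*}
which is \eqref{eq33}. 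This is essentially the computation already carried out for the first identity of Corollary~\ref{cor32}, now with the extra factor $F_x$ inserted before the partial trace.

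Next, to establish \eqref{eq34}, I would use that $\mscript ^{\wedge\wedge}$ is the unique observable measured by $\mscripthat$, so it is characterized by $\rmtr\paren{\rho\,\mscript _x^{\wedge\wedge}}=\rmtr\sqbrac{\mscripthat _x(\rho )}$ for every $\rho\in\sscript (H)$. Taking the trace of \eqref{eq33} gives
\begin{equation*}
\rmtr\sqbrac{\mscripthat _x(\rho )}=\sum _{i,j,k}\rmtr (B_i^k\eta B_j^{k*}F_x)\,\rmtr (P_{\psi _i}\rho P_{\psi _j}).
\end{equation*}
The key simplification is that $\rmtr (P_{\psi _i}\rho P_{\psi _j})=\rmtr (P_{\psi _j}P_{\psi _i}\rho )=\delta _{ij}\elbows{\psi _i,\rho\psi _i}$, since the atoms $P_{\psi _i}$ are mutually orthogonal; this collapses the double index $i,j$ to a single diagonal term. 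Recognizing $\sum _k B_i^k\eta B_i^{k*}=\Gamma _i(\eta )$ and $\elbows{\psi _i,\rho\psi _i}=\rmtr (\rho P_{\psi _i})$ then rewrites the right side as $\rmtr\paren{\rho\sum _i\rmtr\sqbrac{\Gamma _i(\eta )F_x}P_{\psi _i}}$. Since this holds for all $\rho$, uniqueness of the measured observable forces $\mscript _x^{\wedge\wedge}=\sum _i\rmtr\sqbrac{\Gamma _i(\eta )F_x}P_{\psi _i}$, which is \eqref{eq34}.

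I do not expect a serious obstacle: both identities reduce to bookkeeping with the partial trace and the orthogonality of the atoms $P_{\psi _i}$. The one point needing a little care is the off-diagonal collapse in the second step. One must remember that it is the \emph{full} trace, not the partial trace $\rmtr_K$, that annihilates the terms with $i\neq j$, so that the off-diagonal coherences $P_{\psi _i}\rho P_{\psi _j}$ still present in the instrument \eqref{eq33} disappear only once one descends to the observable \eqref{eq34}.
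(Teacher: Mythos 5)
Your proposal is correct and follows essentially the same route as the paper: substitute the form \eqref{eq31} of $\nu$ into \eqref{eq11} and compute the partial trace termwise to get \eqref{eq33}, then take the full trace, use $\rmtr (P_{\psi _i}\rho P_{\psi _j})=\delta _{ij}\elbows{\psi _i,\rho\psi _i}$ to collapse to the diagonal, and identify $\sum _kB_i^k\eta B_i^{k*}=\Gamma _i(\eta )$ to deduce \eqref{eq34} from the uniqueness of the measured observable. Your closing remark about the off-diagonal coherences surviving in the instrument but not in the observable is exactly the point the paper makes after the theorem.
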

\begin{proof}
For any $\rho\in\sscript (H)$, applying \eqref{eq21} and \eqref{eq31} we obtain
\begin{align*}
\mscripthat _x(\rho )&=\rmtr _K\sqbrac{(\rho\otimes\eta )(I\otimes F_x)}
   =\sum _{i,j,k}\rmtr _K(P_{\psi _i}\rho P_{\psi _j}\otimes B_i^k\eta B_j^{k*}F_x)\\
   &=\sum _{i,j,k}\rmtr (B_i^k\eta B_j^{k*}F_x)P_{\psi _i}\rho P_{\psi _j}
\end{align*}
Let $A_x=\sum _i\rmtr\sqbrac{\Gamma _i(\eta )F_x}P_{\psi _i}$ as in \eqref{eq34}. Applying \eqref{eq33} we conclude that
\begin{align*}
\rmtr\sqbrac{\mscripthat _x(\rho )}&=\sum _{i,k}\rmtr (B_i^k\eta B_i^{k*}Fx)\elbows{\psi _i,\rho\psi _i}\\
   &=\sum _i\rmtr\sqbrac{\Gamma _i(\eta )F_x}\elbows{\psi _i,\rho\psi _i}=\rmtr (\rho A_x)
\end{align*}
Hence, $A_x=\mscript _x^{\wedge\wedge}$ which proves \eqref{eq34}.
\end{proof}

Notice that $\mscript _x^{\wedge\wedge}$ is $C$-measurable and the effects $\mscript _x^{\wedge\wedge}$ commute for all $x\in\Omega _F$. This gives a restriction on the measured observable for a $C$-nondisturbing $MM$. We conclude that not all observables can be measured by a $C$-nondisturbing $MM$. In general $\mscripthat _x(\rho )$ need not be $C$-measurable. However, if $\rho$ is $C$-measurable so that $\rho\psi _i=\lambda _i\psi _i$ for $i=1,2,\ldots ,n$, then \eqref{eq33} becomes:
\begin{equation*}
\mscripthat _x(\rho )=\sum _{i,k}\rmtr (B_i^k\eta B_i^{k*}F_x)\lambda _iP_{\psi _i}
  =\sum _i\lambda _i\rmtr\sqbrac{\Gamma _i(\eta )F_x}P_{\psi _i}
\end{equation*}
which is $C$-measurable.

Although $\rmtr$ is cyclic in the sense that $\rmtr (AB)=\rmtr (BA)$, this need hold for $\rmtr _H$. For example, in general we have that
\begin{align*}
\rmtr _H\sqbrac{(A\otimes B)(C\otimes D)}&=\rmtr _H(AC\otimes BD)=\rmtr (AC)BD\ne\rmtr (CA)DB\\
   &=\rmtr _H\sqbrac{(C\otimes D)(A\otimes B)}
\end{align*}
Until now we have studied instruments and observables for the base system. It is sometimes of interest to consider these for the probe system. We define the \textit{post-interaction probe instrument} $\mscripttilde ^\rho\in\rmin (K)$ for all $\rho\in\sscript (H)$, $\sigma\in\sscript (K)$ by
\begin{equation}                
\label{eq35}
\mscripttilde _x^\rho (\sigma )=\rmtr _H\sqbrac{(I\otimes F_x)^{1/2}\nu (\rho\otimes\sigma )(I\otimes F_x)^{1/2}}
\end{equation}
We did not define $\mscripttilde _x^\rho (\sigma )$ to be $\rmtr _H\sqbrac{(\nu\otimes\sigma )(I\otimes F_x)}$ as one might expect because the lack of cyclicity of $\rmtr _H$ prevents this latter definition from being self-adjoint. We also define the
\textit{post-interaction probe observable} to be $(\,\mscripttilde ^\rho)^\wedge$. Corresponding to a channel
$\Gamma (\sigma )=\sum S_i\sigma S_i^*$ on $\sscript (K)$ we define the \textit{dual channel} $\Gamma ^*(a)=\sum S_i^*aS_i$ on
$\escript (K)$.

\begin{thm}    
\label{thm33}
Let $\mscript$ be a $C$-nondisturbing $MM$ with $\nu$ given by \eqref{eq31}. For every $\rho\in\sscript (H)$, $\sigma\in\sscript (K)$ we have that
\begin{align}                
\label{eq36}                        
\mscripttilde _x^\rho (\sigma )&=\sum _i\elbows{\psi _i,\rho\psi _i}F_x^{1/2}\Gamma _i(\sigma )F_x^{1/2}\\
\label{eq37}                          
(\,\mscripttilde ^\rho)_x^\wedge&=\sum _i\elbows{\psi _i,\rho\psi _i}\Gamma _i^*(F_x)
\end{align}
\end{thm}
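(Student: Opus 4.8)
The plan is to prove \eqref{eq36} by direct substitution of the structure theorem for $\nu$ and then to read off \eqref{eq37} from \eqref{eq36} using the defining property of the measured observable. First I would insert the expression \eqref{eq31} for $\nu(\rho\otimes\sigma)$ into the definition \eqref{eq35} of $\mscripttilde_x^\rho(\sigma)$. Since $(I\otimes F_x)^{1/2}=I\otimes F_x^{1/2}$ splits across the tensor factors, conjugating a product operator $P_{\psi_i}\rho P_{\psi_j}\otimes B_i^k\sigma B_j^{k*}$ by it simply produces $P_{\psi_i}\rho P_{\psi_j}\otimes F_x^{1/2}B_i^k\sigma B_j^{k*}F_x^{1/2}$, leaving the $H$-factor untouched. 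Applying $\rmtr_H$ together with the identity $\rmtr_H(D\otimes E)=\rmtr(D)E$ then turns each summand into $\rmtr(P_{\psi_i}\rho P_{\psi_j})F_x^{1/2}B_i^k\sigma B_j^{k*}F_x^{1/2}$.

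The key simplification is the collapse of the double index sum. Because $\rmtr(P_{\psi_i}\rho P_{\psi_j})=\elbows{\psi_i,\rho\psi_j}\rmtr\paren{\ket{\psi_i}\bra{\psi_j}}=\delta_{ij}\elbows{\psi_i,\rho\psi_i}$ by orthonormality of $\brac{\psi_i}$, all off-diagonal ($i\ne j$) terms vanish and only the diagonal survives. This is exactly where the probe instrument differs qualitatively from the base instrument \eqref{eq33}: there one traces over $K$, which preserves the off-diagonal coherences $P_{\psi_i}\rho P_{\psi_j}$, whereas here tracing over $H$ diagonalizes in the context $C$. Pulling the $k$-independent factors $F_x^{1/2}$ outside the sum over $k$ and recognizing $\sum_k B_i^k\sigma B_i^{k*}=\Gamma_i(\sigma)$ yields \eqref{eq36} at once; note that the symmetric placement of $F_x^{1/2}$ adopted in \eqref{eq35} is precisely what makes each summand $F_x^{1/2}\Gamma_i(\sigma)F_x^{1/2}$ manifestly self-adjoint and positive.

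To obtain \eqref{eq37} I would invoke the defining relation $\rmtr\sqbrac{\mscripttilde_x^\rho(\sigma)}=\rmtr\sqbrac{\sigma(\mscripttilde^\rho)_x^\wedge}$ that characterizes the measured observable. Taking the trace of \eqref{eq36}, using cyclicity of the full trace on $K$ to combine $F_x^{1/2}(\cdots)F_x^{1/2}$ into $F_x\Gamma_i(\sigma)$, and then applying the duality $\rmtr\sqbrac{F_x\Gamma_i(\sigma)}=\rmtr\sqbrac{\Gamma_i^*(F_x)\sigma}$ (immediate from $\Gamma_i(\sigma)=\sum_k B_i^k\sigma B_i^{k*}$ and $\Gamma_i^*(F_x)=\sum_k B_i^{k*}F_xB_i^k$), the trace becomes $\rmtr\sqbrac{\paren{\sum_i\elbows{\psi_i,\rho\psi_i}\Gamma_i^*(F_x)}\sigma}$. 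Since this holds for every $\sigma\in\sscript(K)$ and states span the operators in finite dimensions, the bracketed operator is the unique measured observable, giving \eqref{eq37}.

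There is no serious obstacle here; the computation is routine once the structure theorem \eqref{eq31} is in hand. The only points requiring care are tracking which Hilbert space each trace acts on—confusing $\rmtr_H$ with $\rmtr_K$ would destroy the diagonalization in the second paragraph—and remembering that the full-trace cyclicity used for \eqref{eq37} is legitimate even though the partial trace $\rmtr_H$ is not cyclic, exactly as the paper's own remark emphasizes.
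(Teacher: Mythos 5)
Your proposal is correct and follows essentially the same route as the paper: substitute \eqref{eq31} into \eqref{eq35}, use $\rmtr_H$ of a product operator to collapse the sum to the diagonal $i=j$ terms via $\rmtr(P_{\psi_i}\rho P_{\psi_j})=\delta_{ij}\elbows{\psi_i,\rho\psi_i}$, and then derive \eqref{eq37} from the trace identity $\rmtr\sqbrac{\Gamma_i(\sigma)F_x}=\rmtr\sqbrac{\sigma\Gamma_i^*(F_x)}$. The only cosmetic difference is that the paper first rewrites $P_{\psi_i}\rho P_{\psi_j}$ as $\elbows{\psi_i,\rho\psi_j}\ket{\psi_i}\bra{\psi_j}$ before tracing, which is the same computation.
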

\begin{proof}
For $\rho\in\sscript (H)$, $\sigma\in\sscript (K)$ we have from \eqref{eq35} that
\begin{align*}
\mscripttilde _x^\rho (\sigma )
   &=\sum _{i,j,k}\elbows{\psi _i,\rho\psi _j}\rmtr _H\paren{\ket{\psi _i}\bra{\psi _j}\otimes F_x^{1/2}B_i^k\sigma B_j^{k*}F_x^{1/2}}\\
   &=\sum _{i,k}\elbows{\psi _i,\rho\psi _i}F_x^{1/2}B_i^k\sigma B_i^{k*}F_x^{1/2}\\
   &=\sum _i\elbows{\psi _i,\rho\psi _i}F_x^{1/2}\Gamma _i(\sigma )F_x^{1/2}
\end{align*}
which verifies \eqref{eq36}. To verify \eqref{eq37}, we see from \eqref{eq36} that
\begin{align*}
\rmtr\sqbrac{\sigma (\,\mscripttilde ^\rho )_x^\wedge}&=\rmtr\sqbrac{\mscripttilde _x^\rho (\sigma )}
   =\sum _i\elbows{\psi _i,\rho\psi _i}\rmtr\sqbrac{\Gamma _i(\sigma )F_x}\\
   &=\sum _i\elbows{\psi _i,\rho\psi _i}\rmtr\sqbrac{\sigma\Gamma _i^*(F_x)}
   =\rmtr\sqbrac{\sigma\sum _i\elbows{\psi _i,\rho\psi _i}\Gamma _i^*(F_x)}
\end{align*}
and \eqref{eq37} follows.
\end{proof}

Since $\brac{\Gamma _i^*(F_x)}$ is an observable for each $i=1,2,\ldots ,n$, it follows from \eqref{eq37} that $(\mscripthat ^\rho )^\wedge$ is a convex combination of these observables depending on $\rho\in\sscript (H)$. This shows that the post-interaction probe observable is obtained by first running the initial probe observable through the channel and then taking a convex combination depending on the input state $\rho$.

The simplest case for the previous theory is when $\nu$ is a unitary $C$-nondisturbing channel so that
$\nu (\rho\otimes\eta )=U(\rho\otimes\eta )U^*$ where $U\in\lscript (H\otimes K)$ is unitary. We then have that
$U(\psi _i\otimes\phi )=\psi _i\otimes V_i\phi$ where $V_i\in\lscript (K)$ are unitary, $i=1,2,\ldots ,n$. We then obtain:
\begin{align}
\nu (\rho\otimes\eta )&=\sum _{i,j}(P_{\psi _i}\rho P_{\psi _j}\times V_i\eta V_j^*)\notag\\                
\label{eq38}                        
\mscripthat _x(\rho )&=\sum _{i,j}\rmtr (V_i\eta V_j^*F_x)P_{\psi _i}\rho P_{\psi _j}\\
\label{eq39}                          
\mscript _x^{\wedge\wedge}&=\sum _i\rmtr (V_i\eta V_i^*F_x)P_{\psi _i}
\end{align}
Moreover, \eqref{eq36}, \eqref{eq37} become
\begin{align}                
\label{eq310}                        
\mscripttilde _x^\rho (\sigma )&=\sum _i\elbows{\psi _i,\rho\psi _i}F_x^{1/2}V_i\sigma V_i^*F_x^{1/2}\\
\label{eq311}                          
(\,\mscripttilde ^\rho )_x^\wedge&=\sum _i\elbows{\psi _i,\rho\psi _i}V_i^*F_xV_i
\end{align}
Notice that if $\eta V_i=V_i\eta$, $i=1,2,\ldots ,n$, then we obtain the simple form $\mscript _x^{\wedge\wedge}=\rmtr (\eta F_x)I_H$.

If $\Omega _A$ is a finite set, an \textit{apparatus on} $H\otimes K$ \textit{with outcome-set} $\Omega _A$ is a map $A\colon\sscript (H)\times\Omega\to\escript (K)$ satisfying $x\mapsto A(\rho ,x)$ is an observable on $K$ for every $\rho\in\sscript (H)$ and $\rho\mapsto A(\rho ,x)$ is affine for $x\in\Omega _A$. Thus, we have that
\begin{equation*}
A\paren{\sum\lambda _i\rho _i,x}=\sum\lambda _iA(\rho _i,x)
\end{equation*}
whenever $\lambda _i\ge 0$, $\sum\lambda _i=1$. A \textit{state-dependent} $MM$ is a set of $MM$s of the form
\begin{equation*}
\mscript _\rho =\paren{H,K,\eta ,\nu ,G(\rho ,x)}
\end{equation*}
where $G$ is an apparatus on $H\otimes K$. The motivation behind this definition is the following. Input a state $\rho$ and run the $MM$
$\mscript =(H,K,\eta ,\nu ,F)$ to produce the post-interaction  probe observable which gives the apparatus
$A(\rho ,x)=(\,\mscripthat ^\rho )_x^\wedge$. We then have the state-dependent $MM$ given by
$\mscript _\rho=\paren{H,K,\eta ,\nu ,A(\rho ,x)}$ which is the post-interaction $MM$ resulting from the input state $\rho\in\sscript (H)$.

Now let $\mscript$ be a $C$-nondisturbing $MM$ with $\nu$ given by \eqref{eq31}. Applying \eqref{eq37} the measured apparatus is
\begin{equation*}
A(\rho ,x)=\sum _i\elbows{\psi _i,\rho\psi _i}\Gamma _i^*(F_x)
\end{equation*}
If we now remeasure using the stat-dependent $MM$ $\mscript _\rho$, applying \eqref{eq34} we obtain the apparatus
\begin{align}                
\label{eq312}
B(\rho ,x)&=\sum _j\rmtr\sqbrac{\Gamma _j(\eta )\sum _i\elbows{\psi _i,\rho\psi _i}\Gamma _i^*(F_x)}P_{\psi _i}\notag\\
   &=\sum _{i,j}\rmtr\sqbrac{\Gamma _j(\eta )\Gamma _i^*(F_x)}P_{\psi _i}\rho P_{\psi _i}
\end{align}
In the case of a unitary $C$-nondisturbing $MM$, \eqref{eq312} becomes
\begin{equation*}
B(\rho ,x)=\sum _{i,j}\rmtr\sqbrac{V_iV_j\eta (V_iV_j)^*F_x}P_{\psi _i}\rho P_{\psi _i}
\end{equation*}

\section{Two Examples}  
This section presents two examples of $C$-nondisturbing $MM$s with unitary channels
$\nu (\rho\otimes P_{\phi _1})=U(\rho\otimes P_{\phi _1})U^*$ where $P_{\phi _1}$ is the initial pure probe state. We write
$\mscript =(H,K,\phi _1,U,F)$.

\begin{exam}{1}  
Suppose $\dim H=\dim K=n$ and $\brac{\phi _i}$, $i=1,2,\ldots ,n$, is an orthonormal basis for $K$. Let $V_i$, $i=1,2,\ldots ,n$, be unitary swap operators on $K$ defined by $V_i(\phi _1)=\phi _i$ $V_i(\phi _i)=\phi _1$ and $V_i(\phi _j)=\phi _j$ for $j\ne i$. Defining the unitary operator
$U\in\lscript (H\otimes K)$ by $U(\psi _i\otimes\phi )=\psi _i\otimes V_i(\phi )$ for all $\phi\in K$, we have that $U$ is $C$-nondisturbing. Letting $\nu$ be the unitary channel on $H\otimes K$ given by $\nu (\sigma )=U\sigma U^*$ we have that $\nu$ is $C$-nondisturbing. Form the
$C$-nondisturbing $MM$ $\mscript =(H,K,\phi _1,U,F)$. Applying \eqref{eq37}, the measured instrument is
\begin{align}                
\label{eq41}
\mscripthat _x(\rho )&=\sum _{i,j}\rmtr (V_iP_{\phi _1}V_j^*F_x)P_{\psi _i}\rho P_{\psi _j}
   =\sum _{i,j}\rmtr\paren{\ket{V_i\phi _1}\bra{V_j\phi _1}F_x}P_{\psi _i}\rho P_{\psi _j}\notag\\
   &=\sum _{i,j}\rmtr\paren{\ket{\phi _i}\bra{\phi _j}F_x}P_{\psi _i}\rho P_{\psi _j}=\sum _{i,j}\elbows{\phi _j,F_x\phi _i}P_{\psi _i}\rho P_{\psi _j}
\end{align}
By \eqref{eq38} the measured observable is
\begin{align*}
\mscript _x^{\wedge\wedge}&=\sum _i\rmtr (V_iP_{\phi _1}V_i^*F_x)P_{\psi _i}
   =\sum _i\rmtr\paren{\ket{V_i\phi _1}\bra{V_i\phi _1}F_x}P_{\psi _i}\\
   &=\sum _i\rmtr\paren{\ket{\phi _i}\bra{\phi _i}F_x}P_{\psi _i}=\sum _i\elbows{\phi _i,F_x\phi _i}P_{\psi _i}
\end{align*}
By \eqref{eq31} the channel becomes
\begin{align}                
\label{eq42}
\nu (\rho\otimes P_{\phi _1})&=\sum _{i,j}(P_{\psi _i}\rho P_{\psi _j}\otimes V_iP_{\phi _1}V_j^*)
  =\sum _{i,j}\paren{P_{\psi _i}\rho P_{\psi _j}\otimes\ket{V_i\phi _1}\ket{V_j\phi _1}}\notag\\
  &=\sum _{i,j}\paren{P_{\psi _i}\rho P_{\psi _j}\otimes\ket{\phi _i}\bra{\phi _j}}
\end{align}
If $\rho$ is $C$-measurable so that $\rho\psi _i=\lambda _i\psi _i$, $i=1,2,\ldots ,n$, then \eqref{eq41} and \eqref{eq42} have the simple forms
\begin{align*}
\mscripthat _x(\rho )&=\sum _i\lambda _i\elbows{\phi _i,F_x\phi _i}P_{\psi _i}\\
\nu (\rho\otimes P_{\phi _1})&=\sum (\lambda _iP_{\psi _i}\otimes P_{\phi _i})\qed
\end{align*}
\end{exam}
\medskip

\begin{exam}{2}  
Let $\dim H=n$, $\dim K=m$ and let $\brac{\phi _i}$ be an orthonormal basis for $K$. Let $i=\sqrt{-1\,}$ be in imaginary unit and define the unitary operators
\begin{equation*}
V_j(\phi _r)=\frac{1}{\sqrt{m\,}}\,\sum _{s=1}^me^{2\pi ijrs/m}\phi _s
\end{equation*}
for $j=1,2,\ldots ,n$, $r,s=1,2,\ldots ,m$ \cite{hz12}. Define the unitary operator $U=\sum _{j=1}^n(P_{\psi _j}\otimes V_j)$ so that
\begin{equation*}
U=(\psi _j\otimes\phi _r)=\psi _j\otimes U_j\phi _r=\psi _j\otimes\frac{1}{\sqrt{m\,}}\,\sum _{s=1}^me^{2\pi ijrs/m}\phi _s
\end{equation*}
Then $U$ and the unitary channel $\nu (\sigma )=U\sigma U^*$ and $C$-nondisturbing. Form the $C$-nondisturbing $MM$
$\mscript =(H,K,\phi _1,U,F)$. As in \eqref{eq42} the channel becomes
\begin{align*}
\nu (\rho\otimes P_{\phi _1})&=\sum _{j,k}\paren{P_{\psi _j}\rho P_{\psi _k}\otimes\ket{V_j\phi _1}\bra{V_k\phi _1}}\\
   &=\tfrac{1}{m}\,\sum _{j,k}
   \paren{P_{\psi _j}\rho P_{\psi _k}\otimes\ket{\sum _{s=1}^me^{2\pi ijs/m}\phi _s}\bra{\sum _{t=1}^me^{2\pi ikt/m}\phi _t}}\\
   &=\tfrac{1}{m}\sum _{\substack{j,k\\s,t}}e^{2\pi i(js-kt)/m}\paren{P_{\psi _j}\rho\psi _k\otimes\ket{\phi _s}\bra{\phi _t}}
\end{align*}
In a similar way, we obtain
\begin{equation}                
\label{eq43}
\rmtr\paren{\ket{V_j\phi _1}\bra{V_k\phi _1}F_x}=\tfrac{1}{m}\,\sum _{s,t}e^{2\pi i(js-kt)/m}\elbows{\phi _t,F_x\phi _s}
\end{equation}
and applying \eqref{eq38}, the measured instrument becomes
\begin{align*}
\mscripthat _x(\rho )&=\sum _{j,k}\rmtr\paren{\ket{V_j\phi _1}\bra{V_k\phi _1}F_x}P_{\psi _j}\rho P_{\psi _k}\\
  &=\tfrac{1}{m}\,\sum _{\substack{j,k\\s,t}}e^{2\pi i(js-kt)/m}\elbows{\phi _t,F_x\phi _s}P_{\psi _j}\rho P_{\psi _k}
\end{align*}
Applying \eqref{eq39} and \eqref{eq43}, the measured observable becomes
\begin{equation}                
\label{eq44}
\mscript _x^{\wedge\wedge}=\tfrac{1}{m}\,\sum _{j,s,t}e^{2\pi ij(s-t)/m}\elbows{\phi _t,F_x\phi _x}P_{\psi _j}
\end{equation}

As a simple example, suppose $\phi _k$ are eigenvectors of $F_x$ for all $x\in\Omega _F$. Then $F_x\phi _s=c_x^s$ for all $x,s$, where
$0\le c_x^s\le 1$. Then by \eqref{eq44} we obtain
\begin{align*}
\mscript _x^{\wedge\wedge}&=\tfrac{1}{m}\,\sum _{j,s,t}e^{2\pi ij(s-t)/m}\elbows{\phi _t,c_x^s\phi _s}P_{\psi _j}
  =\tfrac{1}{m}\,\sum _{j,s}c_x^sP_{\psi _j}\\
  &=\paren{\tfrac{1}{m}\,\sum _sc_x^s}I_H=\elbows{F_x}I_H
\end{align*}
where $\elbows{F_x}$ is the average eigenvalue of $F_x$.\hfill\qedsymbol 
\end{exam}


\begin{thebibliography}{99}
\bibitem{bgl95}P.~Busch, M.~Grabowski and P.~Lahti, \textit{Operational Quantum Physics}, Springer-Verlag, Berlin,
 1995.
\bibitem{gud120}S.~Gudder, Finite quantum instruments, arXiv:quant-ph 2005.13642 (2020).
\bibitem{gud220}S.~Gudder, Parts and composites of quantum systems, arXiv:quant-ph 2009.07371 (2020).
\bibitem{hz12}T.~Heinosaari and M.~Ziman, \textit{The Mathematical Language of Quantum Theory}, Cambridge University Press, Cambridge, 2012.
\bibitem{kra83}K.~Kraus, \textit{States, Effects and Operations}, Springer-Verlag, Berlin, 1983.
\bibitem{nc00}M.~Nielson and I.~Chuang, Quantum Computation and Quantum Information, Cambridge University Press, Cambridge, 2000.

\end{thebibliography}
\end{document}